\definecolor{mygreen}{rgb}{0.0, 0.5, 0.0}
\definecolor{winered}{rgb}{0.8,0,0}
\definecolor{myblue}{rgb}{0,0,0.8}
\newtheorem{problem}{Problem}
\newtheorem{definition}{Definition}
\newtheorem{theorem}{Theorem}
\newtheorem{lemma}{Lemma}
\newtheorem{proposition}{Proposition}
\newtheorem{corollary}{Corollary}
\newtheorem{assumption}{Assumption}
\newtheorem{example}{Example}
\title{\LARGE \bf Distributed Hypothesis Testing and Social Learning in Finite Time with a Finite Amount of Communication}
\author{Shreyas Sundaram and Aritra Mitra
\thanks{The authors are with the School of Electrical and Computer Engineering at Purdue University.   Email: {\tt \{sundara2,mitra14\}@purdue.edu}. This work was supported in part by NSF CAREER award
1653648.}}
\date{}
\begin{document}
\maketitle
\thispagestyle{empty}
\pagestyle{empty}
\begin{abstract}
We consider the problem of distributed hypothesis testing (or social learning) where a network of agents seeks to identify the true state of the world from a finite set of hypotheses, based on a series of stochastic signals that each agent receives.  Prior work on this problem has provided distributed algorithms that guarantee asymptotic learning of the true state, with corresponding efforts to improve the rate of learning.  In this paper, we first argue that one can readily modify existing asymptotic learning algorithms to enable learning in finite time, effectively yielding arbitrarily large (asymptotic) rates.  We then provide a simple algorithm for finite-time learning which only requires the agents to exchange a binary vector (of length equal to the number of possible hypotheses) with their neighbors at each time-step.  Finally, we show that if the agents know the diameter of the network, our algorithm can be further modified to allow all agents to learn the true state and stop transmitting to their neighbors after a finite number of time-steps.
\end{abstract}
\section{Introduction}
A key challenge in the control of networked autonomous systems is to enable the entire system to accurately learn the state of the environment that it is operating in, despite the fact that measurements of that environment may be dispersed throughout the system. Typically, the information gathered by each member of the network provides only a partial view of the global system state, necessitating collaboration amongst members to learn the true state of the environment. 

One such scenario arises when the true state of the world is an element of a finite set $\Theta$ of possible states (or {\it hypotheses}), and each member of a network of agents  receives a stream of stochastic measurements of the environment (where the statistics of the measurements are a function of the true state).  Each agent is required to maintain a belief vector (i.e., a probability distribution) over the set of possible states, which it then updates based on its local measurements and information that it exchanges with neighbors.  This problem has been studied under various names in the literature, including {\it distributed hypothesis testing}, {\it distributed inference}, and {\it social learning}.  The key questions in this class of problems include: (i) What (or how much) information should the agents exchange with their neighbors at each time-step? (ii) How should the agents update their beliefs over the set of possible states based on their local measurements and the information from their neighbors? (iii) What is the fastest {\it rate} at which the true state can be learned in such settings?  

This class of problems has been studied for several decades, initially for scenarios involving a centralized fusion center \cite{fusion1}, and more recently in fully distributed settings where agents are interconnected over a network \cite{jad1,jad2,liu,rad,shahinparam,shahinTAC,nedic,nedic2,lalitha1,lalitha2,su1,mitraACC19}.
The distributed algorithms provided in these latter papers require each agent to iteratively combine belief vectors obtained from their neighbors with Bayesian updates involving their local signals \cite{jad1,jad2,liu,rad,shahinparam,shahinTAC,nedic,nedic2,lalitha1,lalitha2,su1,mitraACC19}.  
These rules ensure that all agents asymptotically learn the true state of the world, with the main differences being in the rate of learning.  Specifically, the linear and log-linear updating rules proposed in \cite{jad1, jad2,nedic,nedic2,lalitha1,lalitha2,shahinTAC} ensure that beliefs on false hypotheses exponentially decay to zero, at a rate that is determined by a convex combination of the relative entropies of the distributions of the signals received by the agents.  The paper \cite{mitraACC19} proposed a  different approach based on a ``min'' rule, which improves on these asymptotic rates: it ensures that beliefs on each false hypothesis decay exponentially fast at a rate given by the {\it largest} relative entropy between the true state and that false hypothesis over all agents.  
Recently, \cite{mitra2019communication}  showed that exponentially fast learning can be obtained even when the inter-communication intervals grow exponentially over time.


\subsection*{Contributions of this paper}  
First, we provide a simple result  showing that for any algorithm that enables learning asymptotically, there is a straightforward modification of that algorithm that enables learning in finite-time.  This implies that {\it arbitrarily fast} learning rates can be achieved to solve this problem.

Second, we provide a simple algorithm that not only provides finite-time learning, but {\it also}   only requires the agents to exchange a {\it binary vector} (of size equal to the number of hypotheses) at each iteration, as opposed to exchanging probability distributions as in all existing works. 

Third, we show that if each agent knows the diameter of the network, our algorithm can be modified to ensure that all agents learn the true state in finite time, exchange only an $m$-bit vector with their neighbors at each time-step,  {\it and stop communicating} with their neighbors after a finite number of time-steps, almost surely.

Our algorithms make the same assumptions that essentially all of the previously discussed works require (other than  knowledge of the diameter for the third contribution above),  
and significantly reduce the amount of communication required for learning compared to existing approaches, where agents have to communicate infinitely often. 

\subsection*{Notation and Terminology}
We will use the notation $\mathcal{G}=\{\mathcal{V},\mathcal{E}\}$ to denote a {\it graph} (or network), where $\mathcal{V} = \{1, 2, \ldots, n\}$ is a set of nodes (or agents), and $\mathcal{E} \subseteq \mathcal{V}\times\mathcal{V}$ is a set of edges.  A sequence of nodes $i_1, i_2, \ldots, i_k$ is said to be a {\it path} if $(i_j, i_{j+1}) \in \mathcal{E}$ for all $j \in \{1, 2, \ldots, k-1\}$; the {\it length} of the path is equal to $k-1$. Given two nodes $i, j \in \mathcal{V}$, the {\it distance} from $i$ to $j$ in $\mathcal{G}$ is the length of the shortest path from $i$ to $j$, and denoted by $d(i,j)$.  The graph is said to be {\it strongly connected} if, for all pairs of nodes $i, j \in \mathcal{V}$, there is a path from $i$ to $j$. The {\it diameter} of the graph is the maximum distance over all pairs of nodes, and is denoted by $D(\mathcal{G}) = \max_{i,j \in \mathcal{V}}d(i,j)$.
For each node $i \in \mathcal{V}$, the set of {\it in-neighbors} is denoted by $\mathcal{N}_i^{-} = \{j \in \mathcal{V} \vert (j,i) \in \mathcal{E}\}$, and the set of {\it out-neighbors} is denoted by $\mathcal{N}_i^{+} = \{j \in \mathcal{V} \vert (i,j) \in \mathcal{E}\}$.

We use $\mathbf{1}_m$ to denote the column vector of length $m$ with all elements equal to $1$.  Given a set of binary vectors $v_1, v_2, \ldots, v_k \in \{0,1\}^m$, we take the {\it intersection} of those vectors to be a binary vector $v \in \{0,1\}^m$, with the property that for all $j \in \{1, 2, \ldots, m\}$, the $j$-th element of $v$ is equal to `$1$' if and only if the $j$-th element of all of the vectors $v_1, v_2, \ldots, v_k$ is equal to `$1$'.  We denote this operation by \texttt{intersect}$(\{v_1, v_2, \ldots, v_k\})$, and note that it can be performed by simply taking an element-wise minimum or product of the given binary vectors.

\section{Problem Formulation}
Consider a network of agents modeled by a time-invariant and strongly connected graph $\mathcal{G} = \{\mathcal{V},\mathcal{E}\}$, where $\mathcal{V} = \{1, 2, \ldots, n\}$ is the set of $n$ agents, and $\mathcal{E} \subset \mathcal{V}\times \mathcal{V}$ is the set of edges, indicating communication capabilities between the agents.  In particular, the presence of an edge $(i,j) \in \mathcal{E}$ indicates that agent $i$ can transmit information to agent $j$.  

The network of agents is tasked with determining the true state of the world from a set of possible hypotheses $\Theta = \{\theta_1, \theta_2, \ldots, \theta_m\}$.  We will denote the true state (unknown to the agents a priori) as $\theta^* \in \Theta$.  Each agent has a sensor that receives a stream of stochastic measurements, whose statistics are dependent on the underlying state of the world.  More specifically, at each time-step $t \in \mathbb{N}$, each agent $i \in \mathcal{V}$ receives a measurement $s_{i,t} \in \mathcal{S}_i$, where $\mathcal{S}_i$ denotes the (finite) signal space of agent $i$.  For each possible realized state $\theta \in \Theta$, the measurement $s_{i,t}$ is a random variable whose distribution is denoted by $l_i(\cdot \vert \theta)$.  In particular, for each $w \in \mathcal{S}_i$, the quantity $l_i(w | \theta)$ denotes the probability that the measurement $s_{i,t}$ takes the value $w$ at time-step $t$ when the true state of the world is $\theta \in \Theta$.  We make the following standard assumptions  \cite{jad1,jad2,shahinTAC,nedic,lalitha1}.

\begin{assumption}
\label{assump:states_and_signals}
The signals and states satisfy the following properties:
\begin{itemize}
    \item For all states $\theta \in \Theta$ and for all agents $i \in \mathcal{V}$, the measurements seen by agent $i$ are independent and identically distributed over time.
    \item Each agent $i \in \mathcal{V}$ knows the set of  distributions $\{l_i(\cdot|\theta) : \theta \in \Theta\}$ of the measurements it would see under each possible state of the world (but does not know the distributions of the measurements received by other agents).
    \item For all states $\theta \in \Theta$,
    agents $i \in \mathcal{V}$, and measurement values $w \in \mathcal{S}_i$, the distributions satisfy $l_i(w \vert \theta) > 0$.
    \item There exists a single (and fixed) true state $\theta^* \in \Theta$ that is unknown to all the agents, and which generates the measurements seen by all the agents.
    \end{itemize}
\end{assumption}
As we will argue later, some of these assumptions can be relaxed for our algorithm.  We also make the following assumption purely for ease of exposition (our algorithm can be easily extended for situations with general priors).

\begin{assumption}
Each agent $i \in \mathcal{V}$ starts with a uniform prior on each of the possible states, denoted by the vector $\pi_{i,0} = \frac{1}{m}\mathbf{1}_m$.
\label{assump:uniform_priors}
\end{assumption}

Let $s_t = \left[\begin{matrix}s_{1,t} & s_{2,t} & \cdots & s_{n,t}\end{matrix}\right]'$ denote the vector of measurements seen by all agents at time-step $t$, and denote $\mathcal{S} = \mathcal{S}_1 \times \mathcal{S}_2 \times \cdots \times \mathcal{S}_n$, so that $s_t \in \mathcal{S}$ for all $t \in \mathbb{N}$.  We denote the distribution of $s_t$ under a given state $\theta \in \Theta$ by $l(\cdot|\theta)$.  We define a probability space for the stream of measurement vectors by $(\Omega, \mathcal{F}, \mathbb{P}^{\theta^*})$, where  $\Omega\triangleq\{\omega: \omega=(s_1,s_2,\ldots), \forall s_t\in\mathcal{S}, \forall t \in \mathbb{N}_{+}\}$, $\mathcal{F}$ is the $\sigma$-algebra generated by the observation profiles, and $\mathbb{P}^{\theta^{\star}}$ is the probability measure induced by sample paths in $\Omega$. Specifically, $\mathbb{P}^{\theta^{\star}}=\prod \limits_{t=1}^{\infty}l(\cdot|\theta^{\star})$. We will use the abbreviation a.s. to indicate almost sure occurrence of an event w.r.t. $\mathbb{P}^{\theta^{\star}}$. 

\subsection*{Objective}  
The goal of all agents in the network is to learn the true state $\theta^*$.  However, no individual agent's measurements may be sufficiently informative to allow it to learn $\theta^*$ on its own.  Thus, the agents have to exchange information with their neighbors in the network, and update their beliefs over the set of states in such a way that all agents eventually learn the true state.  More specifically, each agent $i \in \mathcal{V}$ maintains a belief vector (i.e., a probability distribution over $\Theta$) $\mu_{i,t} \in [0,1]^m$, which it updates over time based on its received measurements and information from its neighbors.  We will denote the element of $\mu_{i,t}$ corresponding to a particular state $\theta \in \Theta$ by $\mu_{i,t}(\theta)$. We will also use $e_{\theta^*} \in \{0,1\}^m$ to denote the indicator vector with a single `1' in the entry corresponding to $\theta^*$, and zeros everywhere else.  The distributed hypothesis testing problem is defined as follows.

\begin{problem}\label{prob:dist_ht}
Design a set of information exchange and belief update rules so that for all agents $i\in \mathcal{V}$, the belief vector $\mu_{i,t}$ converges to the vector $e_{\theta^*}$ (a.s.), i.e., for all $i\in \mathcal{V}$, $\mu_{i,t}(\theta^*) \rightarrow 1$ as $t \rightarrow \infty$ a.s.
\end{problem}

As we noted in the introduction, there are a variety of algorithms that have been proposed to solve this problem (asymptotically) \cite{jad1,jad2,shahinTAC,nedic,lalitha1,mitraACC19}.  
We will first show that these algorithms can be modified in a straightforward manner to obtain finite-time learning (a.s.).  In other words, for each sample path in a set of measure $1$, these algorithms can be modified so that there exists a finite (sample path dependent) $T \in \mathbb{N}$ such that for all $i \in \mathcal{V}$, $\mu_{i,t}(\theta^*) = 1$ and $\mu_{i,t}(\theta) = 0$ for all $t \ge T$ and all $\theta \in \Theta \setminus \{\theta^*\}$.
We then develop a simple learning rule that provides finite-time learning while only requiring the agents to exchange binary vectors for a finite number of time-steps.


\section{A General Result on Finite-Time Distributed Hypothesis Testing}
\label{sec:finite_time_learning}
We start with the following  result, showing that a large class of existing algorithms that provide asymptotic learning can be easily modified to provide finite-time convergence.

\begin{proposition}
Consider an algorithm $\mathcal{A}$ that solves Problem~\ref{prob:dist_ht}, and let $\mu_{i,t}$, $i \in \mathcal{V}$ be the belief vectors maintained by each agent under that algorithm.  For all $i \in \mathcal{V}$, let agent $i$ run the algorithm $\mathcal{A}$, but also maintain an additional vector $\bar{\mu}_{i,t} \in \{0,1\}^m$ at each time-step of the algorithm, where $\bar{\mu}_{i,t}$ has a `1' in the element where $\mu_{i,t}$ has its largest value (breaking ties arbitrarily), and zeros everywhere else.  Then $\bar{\mu}_{i,t}$ converges to $e_{\theta^*}$ in finite time under algorithm $\mathcal{A}$ a.s.
\label{prop:finite_time_learning}
\end{proposition}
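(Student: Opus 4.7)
The plan is to observe that once the true-state belief exceeds $1/2$, it must be the strict argmax of the belief vector (since beliefs form a probability distribution over $\Theta$), and then leverage the fact that this threshold is reached in finite time on almost every sample path.

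First, I would fix an agent $i \in \mathcal{V}$ and invoke the hypothesis that algorithm $\mathcal{A}$ solves Problem~\ref{prob:dist_ht}. This gives an event $\Omega_i \in \mathcal{F}$ with $\mathbb{P}^{\theta^*}(\Omega_i) = 1$ on which $\mu_{i,t}(\theta^*) \to 1$ as $t \to \infty$. On any sample path $\omega \in \Omega_i$, convergence to $1$ yields a (sample-path dependent) time $T_i(\omega) \in \mathbb{N}$ such that $\mu_{i,t}(\theta^*) > 1/2$ for every $t \geq T_i(\omega)$.

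Next, I would use the fact that $\mu_{i,t}$ is a probability distribution over the finite set $\Theta$, so $\sum_{\theta \in \Theta} \mu_{i,t}(\theta) = 1$. Consequently, for every $\theta \in \Theta \setminus \{\theta^*\}$ and every $t \geq T_i(\omega)$,
\begin{equation*}
\mu_{i,t}(\theta) \leq 1 - \mu_{i,t}(\theta^*) < \tfrac{1}{2} < \mu_{i,t}(\theta^*).
\end{equation*}
Hence $\theta^*$ is the \emph{unique} maximizer of $\mu_{i,t}(\cdot)$ on this tail, so by construction $\bar{\mu}_{i,t} = e_{\theta^*}$ for all $t \geq T_i(\omega)$ (ties are irrelevant because the maximizer is strict).

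Finally, to handle all agents simultaneously, I would take the intersection $\Omega' = \bigcap_{i \in \mathcal{V}} \Omega_i$, which still has $\mathbb{P}^{\theta^*}(\Omega') = 1$ by a finite union bound since $|\mathcal{V}| = n$ is finite. On each $\omega \in \Omega'$, setting $T(\omega) = \max_{i \in \mathcal{V}} T_i(\omega)$ gives a finite time beyond which $\bar{\mu}_{i,t} = e_{\theta^*}$ for every $i \in \mathcal{V}$, establishing the claim. There is no real technical obstacle here; the proof is essentially a soft topological observation (convergence to a vertex of the simplex implies eventual agreement with that vertex under the $\arg\max$ rule), plus a finite union bound over agents.
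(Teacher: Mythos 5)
Your proof is correct and follows essentially the same route as the paper's: both arguments extract, on an almost-sure event, a sample-path-dependent finite time after which $\mu_{i,t}(\theta^*)$ strictly dominates every other component, so the $\arg\max$ vector equals $e_{\theta^*}$ thereafter. The only cosmetic difference is that you derive strict dominance from the $1/2$ threshold plus the simplex constraint, whereas the paper invokes the componentwise convergences $\mu_{i,t}(\theta)\to 0$ directly; the substance is identical.
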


\begin{proof}
Under any algorithm $\mathcal{A}$ that solves Problem~\ref{prob:dist_ht}, let $\bar{\Omega} \subset \Omega$ be the set of sample paths (of measure 1) for which the beliefs $\mu_{i,t}$ held by each agent converge to the vector $e_{\theta^*}$.  For each $\omega \in \bar{\Omega}$, we have  $\mu_{i,t}(\theta^*) \rightarrow 1$ and $\mu_{i,t}(\theta) \rightarrow 0$ for all $\theta \in \Theta \setminus \{\theta^*\}$ along that sample path.  Thus, for all $\omega \in \bar{\Omega}$ there exists $T(\omega) \in \mathbb{N}$ such that for all $i \in \mathcal{V}$, for all $t \ge T(\omega)$, and for all $\theta \ne \theta^*$, we have $\mu_{i,t}(\theta^*) > \mu_{i,t}(\theta)$.  Thus, for all $\omega\in \bar{\Omega}$, the vector $\bar{\mu}_{i,t}$ specified in the proposition will take on the value $e_{\theta^*}$ for all $t \ge T(\omega)$ and all $i \in \mathcal{V}$.
\end{proof}

The above result, which is perhaps obvious in hindsight, does provide some insights into the problem of distributed hypothesis testing.  In particular, it suggests that {\it arbitrarily fast rates} of learning can be achieved for this problem, simply by modifying existing algorithms in a straightforward manner. It is worth noting that the above result {\it does not} detract from existing asymptotic learning algorithms.  On the contrary, the above result introduces new metrics (other than asymptotic rates of learning) that would be of interest to understand in the context of those algorithms.  For example, which asymptotic learning algorithm, when modified as in Proposition~\ref{prop:finite_time_learning}, would yield the smallest time (in an appropriate probabilistic sense) to learn the true state?  Is there a relationship between the asymptotic rate of learning (for the base algorithm) and the finite time guarantee provided by the amended algorithm?  These are just some of the questions that would be worth pursuing for future research.

Here, we turn our attention to another question, namely understanding how much information the agents need to exchange with each other to solve the distributed hypothesis testing problem in finite-time.

\section{Towards a Communication-Efficient Finite-Time Algorithm: Gaining Intuition}
We start by establishing some preliminary concepts that will provide intuition for our eventual algorithm. 

\subsection{What Can Each Agent Do with Local Bayesian Updates?}
First, as in \cite{mitraACC19}, we ask the question, ``What information can each agent infer about the states based purely on its own signals?''  To answer this question, we will need the concept of {\it distinguishability} between a pair of states $\theta_p, \theta_q \in \Theta$.  

\begin{definition}{(\textbf{Distinguishable States})}\label{def:distinguishable_states}
Consider a distinct pair of states $\theta_p, \theta_q \in \Theta$.  We say that these states are {\it distinguishable} by agent $i \in \mathcal{V}$ if $D(l_i(\cdot|\theta_p)||l_i(\cdot|\theta_q)) > 0$, where $D(l_i(\cdot|\theta_p)||l_i(\cdot|\theta_q))$ represents the KL-divergence \cite{cover} between the distributions $l_i(\cdot|\theta_p)$ and $l_i(\cdot|\theta_q)$.  On the other hand, if $D(l_i(\cdot|\theta_p)||l_i(\cdot|\theta_q)) = 0$, we say that states $\theta_p$ and $\theta_q$ are {\it indistinguishable} by agent $i$.
\end{definition}

Note that the KL-divergence between two distributions $l_i(\cdot|\theta_p)$ and $l_i(\cdot|\theta_q)$ is always nonnegative, and zero if and only if the two distributions are exactly the same (over the finite signal space $\mathcal{S}_i$).  Thus, distinguishability between $\theta_p$ and $\theta_q$ by agent $i$ implies that the signals seen by agent $i$ under each of those different states will have different statistics.  Based on this, we will define the following sets.

\begin{definition}\label{def:possible_true_states}
For each agent $i \in \mathcal{V}$, and for each state $\theta \in \Theta$, define the set $\Theta_i^{\theta} \subseteq \Theta$ to be the set of all states that are indistinguishable from the  state $\theta$ by agent $i$.  In particular, the set $\Theta_i^{\theta^*} \subseteq \Theta$ is the set of states that are indistinguishable from the true state by agent $i$.
\end{definition}

In other words, the signals seen by agent $i$ under the true state $\theta^*$ will never allow it to distinguish between the states in $\Theta_i^{\theta^*}$.  To make this more precise, we  now discuss how distinguishability between two states can  be leveraged by the agent to determine which (if any) of those two states could possibly be true based on its local signals.  Consider a simple Bayesian update performed by agent $i$, of the form
\begin{equation}
\pi_{i,t+1}(\theta)=\frac{l_i(s_{i,t}|\theta)\pi_{i,t}(\theta)}{\sum  \limits_{p=1}^{m} l_i(s_{i,t}|\theta_p)\pi_{i,t}(\theta_p)}, \enspace \forall \theta \in \Theta,
\label{eqn:Bayes}
\end{equation}
where $\pi_{i,0}(\theta) = \frac{1}{m}$, $\forall \theta \in \Theta$, is a uniform prior on all states.  Following the terminology in \cite{mitraACC19}, we will refer to $\pi_{i,t}(\theta)$ as the {\it local belief} of agent $i \in \mathcal{V}$ on the state $\theta \in \Theta$ at the start of time-step $t$.  
 
The following result shows the behavior of the local beliefs generated by the (local) Bayesian update \eqref{eqn:Bayes}.

\begin{lemma}(\cite{mitraTAC19})
Consider the local Bayesian update given by \eqref{eqn:Bayes}.  The following properties hold.
\begin{itemize}
    \item For all states $\theta \in \Theta \setminus \Theta_i^{\theta^*}$, $\pi_{i,t}(\theta) \rightarrow 0$ a.s. as $t \rightarrow \infty$.
    \item For all $\theta \in \Theta_i^{\theta^*}$,  $\pi_{i,t}(\theta) \rightarrow \frac{1}{|\Theta_i^{\theta^*}|}$ a.s. as $t \rightarrow \infty$.
\end{itemize}
\label{lemma:Bayes}
\end{lemma}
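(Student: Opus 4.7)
The plan is to study the log-ratios $\lambda_{i,t}(\theta) \triangleq \log \frac{\pi_{i,t}(\theta)}{\pi_{i,t}(\theta^*)}$ for each $\theta \in \Theta$, since the normalizing denominator in \eqref{eqn:Bayes} cancels when one divides the update for $\theta$ by the update for $\theta^*$. This yields the telescoping recursion
\begin{equation*}
\lambda_{i,t+1}(\theta) = \lambda_{i,t}(\theta) + \log \frac{l_i(s_{i,t}|\theta)}{l_i(s_{i,t}|\theta^*)},
\end{equation*}
and the uniform prior (Assumption~\ref{assump:uniform_priors}) gives $\lambda_{i,0}(\theta) = 0$. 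Under $\mathbb{P}^{\theta^*}$, the increments are i.i.d.\ (Assumption~\ref{assump:states_and_signals}) and bounded (since $\mathcal{S}_i$ is finite and all likelihoods are strictly positive), so I can apply the strong law of large numbers to conclude that $\lambda_{i,t}(\theta)/t$ converges a.s.\ to $\mathbb{E}_{\theta^*}\!\left[\log \frac{l_i(s_{i,t}|\theta)}{l_i(s_{i,t}|\theta^*)}\right] = -D(l_i(\cdot|\theta^*)\,\|\,l_i(\cdot|\theta))$.

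Next I would split into the two cases of the lemma. For $\theta \in \Theta_i^{\theta^*}$, Definition~\ref{def:distinguishable_states} and the positivity of the likelihoods on the finite set $\mathcal{S}_i$ imply $l_i(w|\theta) = l_i(w|\theta^*)$ for every $w \in \mathcal{S}_i$, so each increment above vanishes identically and $\pi_{i,t}(\theta) = \pi_{i,t}(\theta^*)$ for all $t$ along every sample path. For $\theta \notin \Theta_i^{\theta^*}$, the KL-divergence is strictly positive, so $\lambda_{i,t}(\theta) \to -\infty$ a.s., i.e., $\pi_{i,t}(\theta)/\pi_{i,t}(\theta^*) \to 0$ a.s. Combining these with the normalization $\sum_{\theta \in \Theta}\pi_{i,t}(\theta) = 1$ and partitioning $\Theta$ into $\Theta_i^{\theta^*}$ and its complement gives
\begin{equation*}
|\Theta_i^{\theta^*}|\,\pi_{i,t}(\theta^*) + \sum_{\theta \notin \Theta_i^{\theta^*}} \pi_{i,t}(\theta^*)\cdot\frac{\pi_{i,t}(\theta)}{\pi_{i,t}(\theta^*)} = 1.
\end{equation*}
Since $\pi_{i,t}(\theta^*) \in [0,1]$ and each ratio in the sum vanishes a.s., the sum itself tends to $0$ a.s., forcing $\pi_{i,t}(\theta^*) \to 1/|\Theta_i^{\theta^*}|$ a.s.; both claims then follow immediately.

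The main obstacle is not the SLLN step itself, which is clean once one notes that bounded log-likelihoods make the increments integrable, but rather the final step of lifting convergence of the ratio $\pi_{i,t}(\theta)/\pi_{i,t}(\theta^*)$ to convergence of $\pi_{i,t}(\theta)$. One has to rule out the pathological possibility that $\pi_{i,t}(\theta^*)$ itself blows up or oscillates, which I would address precisely through the normalization identity above: $\pi_{i,t}(\theta^*)$ is automatically bounded in $[0,1]$, so the ratio-to-zero conclusion for distinguishable states directly yields $\pi_{i,t}(\theta) \to 0$, after which the exact limit $1/|\Theta_i^{\theta^*}|$ is a one-line consequence of the equal-beliefs identity on $\Theta_i^{\theta^*}$.
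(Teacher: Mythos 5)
Your proof is correct. Note that the paper does not actually prove this lemma---it is imported verbatim from the cited reference \cite{mitraTAC19}---so there is no in-paper argument to compare against; but your route (log-likelihood ratios $\lambda_{i,t}(\theta)$ telescoping into an i.i.d.\ sum, the strong law giving drift $-D(l_i(\cdot|\theta^*)\,\|\,l_i(\cdot|\theta))$, the observation that zero KL-divergence with strictly positive likelihoods on a finite signal space forces $\pi_{i,t}(\theta)\equiv\pi_{i,t}(\theta^*)$, and the normalization identity to pin down the limit $1/|\Theta_i^{\theta^*}|$) is precisely the standard argument used in that line of work, and every step, including the bound $\pi_{i,t}(\theta)\le \pi_{i,t}(\theta)/\pi_{i,t}(\theta^*)$ used to pass from vanishing ratios to vanishing beliefs, is sound.
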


\textbf{Key Insight.} The above result shows that 
the local beliefs maintained by each agent $i$ will {\it separate} into two levels, with beliefs on states in $\Theta_i^{\theta^*}$ going to $\frac{1}{|\Theta_i^{\theta^*}|}$, and beliefs on states in $\Theta \setminus \Theta_i^{\theta^*}$ going to zero a.s. 
In particular, since $|\Theta_i^{\theta^*}| \le |\Theta| = m$, we have the following useful corollary of Lemma~\ref{lemma:Bayes}.

\begin{corollary}\label{cor:finite_time_local_beliefs}
Consider the probability space $(\Omega, \mathcal{F}, \mathbb{P}^{\theta^*})$ for the signals seen by the agents, and suppose each agent runs the Bayesian update rule \eqref{eqn:Bayes} to update its local beliefs on the set of states. Suppose Assumption~\ref{assump:states_and_signals} and Assumption~\ref{assump:uniform_priors} hold.  Then, there is a set $\bar{\Omega}\subset \Omega$ of measure $1$ with the following property.  For all $\omega \in \bar{\Omega}$ and for all $\alpha \in (0,1)$, there exists a $T(\omega) \in \mathbb{N}$ such that for all $i \in \mathcal{V}$:
\begin{itemize}
    \item For all $\theta \in \Theta_i^{\theta^*}$, $\pi_{i,t}(\theta) > \frac{\alpha}{m}$ for all $t \ge T(\omega)$.
    \item For all $\theta \in \Theta \setminus \Theta_i^{\theta^*}$, $\pi_{i,t}(\theta) \le \frac{\alpha}{m}$ for all $t \ge T(\omega)$.
     \end{itemize}
\end{corollary}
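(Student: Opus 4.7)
The plan is to derive this corollary almost immediately from Lemma~\ref{lemma:Bayes} by combining its two bullet points in a careful $\varepsilon$-style argument and taking a finite intersection over agents and states to preserve the measure-$1$ property.

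First I would let $\Omega_{i,\theta}$ denote the measure-$1$ set of sample paths on which the conclusion of Lemma~\ref{lemma:Bayes} holds for agent $i$ and state $\theta$ (i.e.\ $\pi_{i,t}(\theta) \to 0$ if $\theta \notin \Theta_i^{\theta^*}$ and $\pi_{i,t}(\theta) \to 1/|\Theta_i^{\theta^*}|$ if $\theta \in \Theta_i^{\theta^*}$). Define
\[
\bar{\Omega} \triangleq \bigcap_{i \in \mathcal{V}} \bigcap_{\theta \in \Theta} \Omega_{i,\theta}.
\]
Since $\mathcal{V}$ and $\Theta$ are finite, this is a finite intersection of measure-$1$ sets, so $\mathbb{P}^{\theta^*}(\bar{\Omega}) = 1$. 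On $\bar{\Omega}$, every one of the finitely many local-belief sequences $\{\pi_{i,t}(\theta)\}_{t \ge 1}$ converges to its asymptotic value described by Lemma~\ref{lemma:Bayes}.

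Next, fix $\omega \in \bar{\Omega}$ and $\alpha \in (0,1)$. For each agent $i$ and each $\theta \in \Theta_i^{\theta^*}$, note that the asymptotic value $1/|\Theta_i^{\theta^*}| \ge 1/m$, and since $\alpha/m < 1/m$, convergence along $\omega$ guarantees that there exists $T_{i,\theta}(\omega,\alpha)$ such that $\pi_{i,t}(\theta) > \alpha/m$ for all $t \ge T_{i,\theta}(\omega,\alpha)$; take for instance any threshold past which $\pi_{i,t}(\theta)$ lies within $(1/m - \alpha/m)/2$ of its limit. Similarly, for each agent $i$ and each $\theta \in \Theta \setminus \Theta_i^{\theta^*}$, since $\pi_{i,t}(\theta) \to 0$ along $\omega$, there exists $T_{i,\theta}(\omega,\alpha)$ such that $\pi_{i,t}(\theta) \le \alpha/m$ for all $t \ge T_{i,\theta}(\omega,\alpha)$.

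Finally, I would set $T(\omega) \triangleq \max_{i \in \mathcal{V},\, \theta \in \Theta} T_{i,\theta}(\omega,\alpha)$, which is a finite maximum over a finite index set. Both bullet-point inequalities of the corollary then hold simultaneously for all agents $i$, all states $\theta$, and all $t \ge T(\omega)$, as required. There is no real obstacle here; the only thing to be careful about is emphasizing that $T$ is allowed to depend on the sample path $\omega$ (and implicitly on $\alpha$), and that the reduction from countable-many to finite-many events is what keeps $\bar{\Omega}$ of measure $1$ after pooling the guarantees across all agents and hypotheses.
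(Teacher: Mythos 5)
Your proof is correct and follows exactly the route the paper intends: the corollary is presented as an immediate consequence of Lemma~\ref{lemma:Bayes} via the observation that $1/|\Theta_i^{\theta^*}| \ge 1/m > \alpha/m$, with the measure-$1$ set obtained by a finite intersection over agents and states and $T(\omega)$ taken as a finite maximum of per-pair thresholds. You also correctly note the quantifier subtlety that $\bar{\Omega}$ is independent of $\alpha$ while $T$ depends on both $\omega$ and $\alpha$, which matches the paper's remark that the $\alpha$-dependence of $T(\omega)$ is elided.
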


The parameter $\alpha \in (0,1)$ can be arbitrary (and this is the reason we elide the dependence of $T(\omega)$ on $\alpha$).  The above result shows that for any fixed $\alpha \in (0,1)$, along each sample path in a set of measure $1$, there is a finite time for each agent after which its local signals are {\it no longer helpful} for it to identify the true state.  In particular, along sample path $\omega$ and for some fixed $\alpha \in (0,1)$, {\it suppose} agent $i$ knew the time $T(\omega)$;\footnote{Of course, it is not apparent {\it how} the agent would be able to identify this time. We will show how to circumvent this when we present our algorithm in the next section, but we continue our thought experiment for now.} then, at this time, agent $i$ can  identify the set $\Theta_i^{\theta^*}$ by simply checking which states have beliefs larger than $\frac{\alpha}{m}$. Note that the agent still would not know {\it which} of the states within $\Theta_i^{\theta^*}$ is the true $\theta^*$.  In the next subsection, we discuss how the agents can resolve this ambiguity by exchanging information over the network.

\subsection{How Should the Network Collectively Leverage the Local Knowledge at Each Agent to Learn the True State?}

Consider a sample path $\omega$ from the set $\bar{\Omega}$ of sample paths of measure $1$ identified by Corollary~\ref{cor:finite_time_local_beliefs}.  Suppose that along that sample path, each agent $i \in \mathcal{V}$ has identified the set $\Theta_i^{\theta^*}$ at time $T(\omega)$, as discussed in the previous subsection.  How should the agents work together to determine the one true state from their individual knowledge of these sets?  To answer this, suppose that the following assumption holds.

\begin{assumption}{(\textbf{Global Identifiability})}\label{assump:global_identifiability}
For all pairs of distinct states $\theta_p, \theta_q \in \Theta$, there exists at least one agent $i \in \mathcal{V}$ for which $\theta_p$ and $\theta_q$ are distinguishable by that agent.
\end{assumption}

The above assumption is made in almost all of the existing literature on distributed hypothesis testing \cite{jad1, jad2, shahinTAC, nedic,lalitha1,mitraACC19}.  A simple implication of the above assumption is that $\cap_{i \in \mathcal{V}}\Theta_i^{\theta^*} = \theta^*$.  

\textbf{Key Insight.} Once each agent $i \in \mathcal{V}$ determines the set $\Theta_i^{\theta^*}$, if the agents simply find the intersection of those sets (i.e., use the {\it process of elimination}), they will  identify the true state $\theta^*$ (under the global identifiability condition).  

At this point, the following facts should be clear to the reader: under Assumption \ref{assump:states_and_signals}, Assumption~\ref{assump:uniform_priors}, and Assumption~\ref{assump:global_identifiability}, (i) there exists a finite time (a.s.) after which each agent $i$'s local beliefs will allow it to recover the set $\Theta_i^{\theta^*}$, and (ii) the agents can identify the true state $\theta^*$ by finding the intersection of those sets. The question now is how to account for the fact that each agent $i$ will not be able to identify the time at which it can conclusively determine $\Theta_i^{\theta^*}$.  We now develop a simple algorithm that circumvents this issue.

\section{A Communication-Efficient Algorithm for Finite-Time Distributed Hypothesis Testing}

We present Algorithm~\ref{algo:PoE}, which we call the Process of Elimination (PoE) algorithm.  Below, we walk through the steps and components of the algorithm.

\begin{algorithm}
\caption{\textbf{(PoE algorithm)}  Each agent $i \in \mathcal{V}$ executes this algorithm in parallel} \label{algo:PoE}
\textbf{Input} A set of time-indices $\mathcal{I}$, denoting epochs
\begin{algorithmic}[1]
\State Agent $i$ initializes $\pi_{i,0} = \frac{1}{m}\mathbf{1}_m$,  $\mu_{i,0} = \pi_{i,0}$, $\psi_{i,0} = \mathbf{1}_m$
\For {$t \in \mathbb{N}$}  
\If {$t \in \mathcal{I}$}
\State Set $\psi_{i,t} \gets$ \texttt{round}$(\pi_{i,t})$
\EndIf
\State Send $\psi_{i,t}$ to all out-neighbors
\State Receive $\psi_{j,t}$ from all in-neighbors $j \in \mathcal{N}_i^{-}$
\State $\psi_{i,t+1} \gets$ \texttt{intersect}$\left(\left\{\psi_{j,t}, j \in \mathcal{N}_i^{-} \cup \{i\}\right\}\right)$
\If {$t+1 \in \mathcal{I}$}
\State Set $\mu_{i,t+1} \gets \frac{1}{\|\psi_{i,t+1}\|_1}\psi_{i,t+1}$ 
\Else
\State Set $\mu_{i,t+1} \gets \mu_{i,t}$
\EndIf
\State Update $\pi_{i,t+1}$ based on the local Bayesian rule \eqref{eqn:Bayes}
\EndFor
\end{algorithmic}
\end{algorithm}

\subsection{Components of the PoE Algorithm}

\subsubsection*{Epochs}
We partition the (discrete) time axis into a set of nonoverlapping contiguous intervals.  Specifically, we define an infinite set of time-indices $\mathcal{I} \subseteq \mathbb{N}$, with $\mathcal{I} = \{t_0, t_1, t_2, \ldots\}$.  We take $t_0 = 0$ without loss of generality. For $k \in \mathbb{N}$, we denote the interval $\{t_k, t_k+1, \ldots, t_{k+1}-1\}$ by $\mathcal{B}_k$, and refer to it as {\it epoch} $k$.  Thus, each time-index in $\mathcal{I}$ indicates the first time-step of an epoch.   We make the following assumptions on the epochs.

\begin{assumption}\label{assump:nondecreasing_epochs}
The epochs are nondecreasing in length, i.e., $|\mathcal{B}_{k+1}| \ge |\mathcal{B}_k|$ for all $k \in \mathbb{N}$.
\end{assumption}

Essentially, at the start of each epoch, each agent $i \in \mathcal{V}$ will form an estimate of the set $\Theta_i^{\theta^*}$, based on its local beliefs (leveraging Corollary~\ref{cor:finite_time_local_beliefs}).  During the rest of the epoch, the agents will find the intersection of those sets, motivated by the discussion in the previous section.  At the end of each epoch, each agent will attempt to identify the true state based on the intersection of the local sets. The agents will repeat this process in each epoch.

\subsubsection*{Vectors maintained by each agent}

To enable the process described above, the PoE algorithm requires each agent $i \in \mathcal{V}$ to maintain three vectors:
\begin{itemize}
    \item The vector $\pi_{i,t} \in [0,1]^m$ represents the local belief vector maintained by each agent.  This vector is initialized with the uniform distribution $\pi_{i,0} = \frac{1}{m}\mathbf{1}_m$ in Line 1 of the algorithm, and is updated according to Bayes' rule (equation \eqref{eqn:Bayes}) at each time-step (Line 14).
    \item The vector $\mu_{i,t} \in [0,1]^m$ represents agent $i$'s beliefs on the set of states, {\it incorporating} the information received from neighbors.  We refer to $\mu_{i,t}$ as the {\it network belief vector} maintained by agent $i$.  This vector is initialized as $\mu_{i,0} = \pi_{i,0}$ in Line 1 of the algorithm.
    \item The binary vector $\psi_{i,t} \in \{0,1\}^m$ is maintained and updated by each agent at each time-step of the algorithm, and is used to calculate the intersection of the sets of potential true hypotheses calculated by each agent.
\end{itemize}

\subsubsection*{Rounding function}
At the start of each epoch $\mathcal{B}_k$, $k \in \mathbb{N}$, we require each agent $i \in \mathcal{V}$ to form an estimate of the set $\Theta_i^{\theta^*}$.  Based on Corollary~\ref{cor:finite_time_local_beliefs}, we do this as follows.  Using the local belief vector $\pi_{i,t_k}$ (which is the belief at the first time-step $t_k$ of epoch $\mathcal{B}_k$), we define the function \texttt{round}$(\pi_{i,t_k})$ to return a binary vector of size $m$.  Specifically, for $j \in \{1, 2, \ldots, m\}$, the $j$-th entry of the returned vector is equal to $1$ if $\pi_{i,t_k}(\theta_j) > \frac{\alpha}{m}$ (for some fixed, but arbitrary, parameter $\alpha \in (0,1))$,\footnote{Our algorithm will guarantee finite-time learning for any $\alpha \in (0,1)$; however, the choice of $\alpha$ will affect the transient behavior of the algorithm.  If $\alpha$ is close to $1$, then the true state may be eliminated for some period of time if some agent's signals cause it to place a low belief on that state. On the other hand, if $\alpha$ is set close to $0$, it will take longer for the beliefs on the false states to fall below the threshold $\frac{\alpha}{m}$, which means that it will take longer for each agent to accurately identify its set $\Theta_i^{\theta^*}$.  Nevertheless, we find that setting $\alpha$ to be small works well in practice.} and is equal to zero otherwise.  This rounding step is done in Lines 3-5 of the algorithm.

\subsubsection*{Distributed set intersection}
At each time-step of each epoch, the agents seek to find the intersection of their local sets of potential true hypotheses.  Specifically, in each epoch $\mathcal{B}_k$ (starting at time-step $t_k \in \mathcal{I}$), recall that $\psi_{i,t_k}$ is set in Line 4 to be the binary vector indicating agent $i$'s estimate of the set $\Theta_i^{\theta^*}$.  At each time-step $t \in \mathcal{B}_k$ of the epoch, each agent transmits its current vector $\psi_{i,t}$ to its out-neighbors (Line 6), and receives the vectors $\psi_{j,t}$ of each in-neighbor $j \in \mathcal{N}_i^{-}$ (Line 7).  Based on these received vectors, agent $i$ finds the intersection of the sets indicated by those vectors in Line 8.  Note that if each agent in a network starts with a binary vector, and each agent iteratively updates its vector by intersecting it with the vectors of its neighbors as above, then after $D$ time-steps (where $D$ is the diameter of the network), the vector maintained by all agents will be the intersection of all initial vectors in the network. We will use this fact in the analysis of the PoE algorithm.

\subsubsection*{Updating the network belief vector}

The network belief vector $\mu_{i,t}$ maintained by each agent $i \in \mathcal{V}$ is updated only at the last time-step of each epoch (captured by the test in Line 9 of the algorithm); it is held constant for all other time-steps of each epoch.  Specifically, at the last time-step $t$ of each epoch, each agent $i \in \mathcal{V}$ calculates its network belief vector $\mu_{i,t+1}$ based on the set intersection vector $\psi_{i,t+1}$ it has computed at that time-step, as indicated by Line 10 of the algorithm.  In particular, Line 10 takes the binary vector $\psi_{i,t+1}$ and normalizes it to be a probability distribution over the set of states.\footnote{In case $\psi_{i,t+1}$ is the zero vector, we interpret $\frac{1}{\|\psi_{i,t+1}\|_1}\psi_{i,t+1}$ as the vector $\frac{1}{m}\mathbf{1}_m$, i.e., equal beliefs on all states.} In the ideal case, the vector $\psi_{i,t+1}$ will have a single $1$ on the true state $\theta^*$, and zeros elsewhere, in which case the network belief vector $\mu_{i,t+1}$ will also have a single $1$ on the true state, and zeros elsewhere; we will show that this will indeed happen after a finite number of time-steps a.s., under the assumptions that we are considering.

\subsection{Analysis of the PoE Algorithm}
We now prove the following key result.

\begin{theorem}\label{thm:poe_result}
Let $\mathcal{G} = \{\mathcal{V},\mathcal{E}\}$ be the network of agents, and let $\mathcal{I} \subseteq \mathbb{N}$ be an infinite set indicating the starting time-steps of the epochs. Suppose Assumption~\ref{assump:states_and_signals}, Assumption~\ref{assump:uniform_priors},  Assumption~\ref{assump:global_identifiability}, and Assumption~\ref{assump:nondecreasing_epochs} hold.  If at least one epoch has length larger than the diameter of the network, then the PoE algorithm guarantees that for all $i\in\mathcal{V}$, the network belief $\mu_{i,t}$ converges to $e_{\theta^*}$ in finite time almost surely.
\end{theorem}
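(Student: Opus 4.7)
The plan is to combine three ingredients: (i) Corollary~\ref{cor:finite_time_local_beliefs}, which guarantees that each agent's local Bayesian beliefs eventually separate into two levels around the threshold $\alpha/m$; (ii) the global identifiability condition (Assumption~\ref{assump:global_identifiability}), which implies $\bigcap_{i \in \mathcal{V}} \Theta_i^{\theta^*} = \{\theta^*\}$; and (iii) the standard fact that iteratively taking intersections with in-neighbors over a strongly connected graph of diameter $D(\mathcal{G})$ completes a distributed set-intersection in at most $D(\mathcal{G})$ time-steps. Given these pieces, the PoE algorithm essentially says: use (i) to recover each $\Theta_i^{\theta^*}$ locally at the start of a sufficiently late epoch, use (iii) to intersect these local sets across the network during that epoch, and invoke (ii) to conclude that the result is $\{\theta^*\}$.

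Concretely, I would first fix a sample path $\omega$ in the measure-$1$ set $\bar{\Omega}$ from Corollary~\ref{cor:finite_time_local_beliefs}, together with the finite time $T(\omega)$ after which, for every agent $i$, $\texttt{round}(\pi_{i,t})$ returns exactly the indicator vector of $\Theta_i^{\theta^*}$. Next, by Assumption~\ref{assump:nondecreasing_epochs} together with the hypothesis that some epoch exceeds $D(\mathcal{G})$ in length, I would pick $k^*$ with $|\mathcal{B}_k| > D(\mathcal{G})$ for all $k \ge k^*$, and let $k'$ be the smallest index with $k' \ge k^*$ and $t_{k'} \ge T(\omega)$. For any epoch $\mathcal{B}_k$ with $k \ge k'$, I would trace through Algorithm~\ref{algo:PoE}: Line~4 initializes $\psi_{i,t_k}$ to the indicator of $\Theta_i^{\theta^*}$ at time $t_k$; during the remainder of the epoch the condition in Line~3 is not triggered, so Lines~6--8 perform $|\mathcal{B}_k|$ successive intersect operations on these initial vectors. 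A short induction on distance shows that after $d$ intersect steps the vector held by agent $i$ equals the intersection of the initial $\psi$'s over all agents at distance at most $d$ from $i$; hence after $|\mathcal{B}_k| \ge D(\mathcal{G})$ steps,
\[
\psi_{i,t_{k+1}} \;=\; \bigcap_{j \in \mathcal{V}} \mathds{1}_{\Theta_j^{\theta^*}} \;=\; e_{\theta^*},
\]
where the last equality uses Assumption~\ref{assump:global_identifiability}. Since $t_{k+1} \in \mathcal{I}$, Line~10 then yields $\mu_{i,t_{k+1}} = e_{\theta^*}$ for every $i \in \mathcal{V}$.

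To finish, I would observe that Line~12 holds $\mu_{i,t}$ constant inside each epoch, and that the preceding argument applies verbatim to every subsequent epoch (since nondecreasing epoch lengths preserve $|\mathcal{B}_k| > D(\mathcal{G})$ and the separation property of Corollary~\ref{cor:finite_time_local_beliefs} persists for all $t \ge T(\omega)$). Consequently $\mu_{i,t} = e_{\theta^*}$ for all $t > t_{k'}$ along $\omega$, and since $\bar{\Omega}$ has measure $1$, finite-time convergence of $\mu_{i,t}$ to $e_{\theta^*}$ holds almost surely. The main obstacle is essentially bookkeeping rather than conceptual: one must verify that the per-epoch ``reset'' in Line~4 is triggered only at the first time-step of an epoch (so the intersection accumulated over $|\mathcal{B}_k|$ rounds reflects a single coherent snapshot of the local estimates rather than being overwritten mid-epoch), and that the distance-$d$ propagation lemma for \texttt{intersect} goes through on a strongly connected directed graph. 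Both are routine once the epoch structure is unpacked, so the crux is really the clean interface between Corollary~\ref{cor:finite_time_local_beliefs} and the deterministic intersection phase.
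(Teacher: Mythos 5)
Your proposal is correct and follows essentially the same route as the paper's own proof: fix a sample path from the measure-one set of Corollary~\ref{cor:finite_time_local_beliefs}, locate the first sufficiently long epoch starting after $T(\omega)$, observe that the \texttt{round} step then yields exact indicators of the sets $\Theta_i^{\theta^*}$, and let the distributed intersection over that epoch (of length exceeding the diameter) produce $e_{\theta^*}$ by global identifiability, with the argument repeating in all later epochs. The only difference is that you spell out the distance-$d$ induction for the \texttt{intersect} propagation, which the paper treats as immediate.
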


\begin{proof}
Under Assumption 1, let $\bar{\Omega} \subset \Omega$ be the set of sample paths of measure $1$ indicated by Corollary $1$.  Based on that corollary, for each $\omega \in \bar{\Omega}$, let $T(\omega)$ be the finite time after which the local belief vector of each agent $i\in \mathcal{V}$ has separated, with the beliefs on states in the set $\Theta_i^{\theta^*}$ being larger than $\frac{\alpha}{m}$, and the beliefs on states in the set $\Theta \setminus \Theta_i^{\theta^*}$ being less than or equal to $\frac{\alpha}{m}$.  Fix an $\omega \in \bar{\Omega}$ for the rest of the proof; the same arguments will hold for all $\omega \in \bar{\Omega}$.

Based on Assumption~\ref{assump:nondecreasing_epochs} (nondecreasing epoch lengths) and the condition in the statement of the theorem that at least one epoch has length larger than the diameter of the network, we know that all epoch lengths will eventually become larger than the diameter of the network.  Let $K \in \mathbb{N}$ be index of the first epoch that has length larger than the diameter of the network {\it and} that starts after $T(\omega)$ (i.e., $t_{K} \ge T(\omega)$).  

Consider time-step $t_K$ in Algorithm~\ref{algo:PoE}.  Since $t_K \in \mathcal{I}$, every agent $i \in \mathcal{V}$ will set $\psi_{i,t_K} \gets$ \texttt{round}$(\pi_{i,t_K})$ in line 4 of the algorithm.  Since $t_K \ge T(\omega)$, and based on the definition of the rounding function, we see that $\psi_{i,t_K}$ will be a binary vector with a $1$ on every element in the set $\Theta_i^{\theta^*}$, and a zero on every other element.  In other words, the vector $\psi_{i,t_K}$ will exactly represent $\Theta_i^{\theta^*}$ for every agent $i \in \mathcal{V}$.  

Now consider the remaining time-steps in epoch $K$.  Based on Algorithm~\ref{algo:PoE}, for all $t \in \mathcal{B}_K$, the vector $\psi_{i,t}$ is updated by every agent by intersecting its current vector with those of its in-neighbors.  Since the length of epoch $K$ is larger than the diameter of the network, 
it is easy to see that at the end of the last time-step $t_{K+1}-1$ of the epoch, the vector $\psi_{i,t_{K+1}}$ will contain the intersection of all the vectors $\{\psi_{j,t_K}, j \in \mathcal{V}\}$.  Since each $\psi_{j,t_K}$ was an indicator vector for $\Theta_j^{\theta^*}$, and using Assumption~\ref{assump:global_identifiability} (Global Identifiability), we see that $\psi_{i,t_{K+1}}$ will contain a single $1$ in the location corresponding to $\theta^*$, and zeros everywhere else.  

Finally at the end of time-step $t_{K+1}-1$, each agent $i \in \mathcal{V}$ will update $\mu_{i,t_{K+1}}$ based on line 10 of Algorithm~\ref{algo:PoE}; this will result in $\mu_{i,t_{K+1}}$ having a single $1$ in the entry corresponding to $\theta^*$, and zeros everywhere else, for all $i \in \mathcal{V}$.  

The above analysis applies to every epoch $k$ with index larger than $K$, since each such epoch will be larger than the diameter of the network, and the quantity $\psi_{i,t_k}$ computed in Line 4 of the algorithm by each agent $i$ will exactly correspond to the set $\Theta_i^{\theta^*}$.  Since $\mu_{i,t}$ is only updated at the end of each epoch, we see that for all $t \ge t_{K+1}$, we have $\mu_{i,t} = \mu_{i,t_{K+1}}$ for all $i \in \mathcal{V}$.  Thus the network beliefs of all agents converge to $e_{\theta^*}$ in a finite number of time-steps.
\end{proof}

\subsection{Discussion}
It is worth emphasizing here that Algorithm~\ref{algo:PoE} circumvents the need for each agent to know the time $T(\omega)$ for each sample path $\omega \in \bar{\Omega}$ (identified in Corollary~\ref{cor:finite_time_local_beliefs}), and also the diameter of the network.  Regarding the time $T(\omega)$, since the algorithm has each agent ``reset'' the vector $\psi_{i,t}$ at the start of each epoch, that vector is guaranteed to be reset to the desired vector (capturing membership in the set $\Theta_i^{\theta^*}$) at some point in time (namely the start of the first epoch after $T(\omega)$).  

Second, by choosing the epoch lengths to be increasing over time, Algorithm~\ref{algo:PoE} removes the need for each agent to know the diameter of the network (recall that the distributed set intersection steps need to be iterated for a number of time steps at least equal to the diameter of the network in order to allow all agents to compute the intersection of all the local sets).  For example, if we choose the epoch start times in such a way that $|\mathcal{B}_k| = k+1$ for all $k \in \mathbb{N}$ (i.e., the epoch lengths increase linearly), then the lengths are guaranteed to eventually become larger than the diameter of the network.  If, however, each agent does know the diameter of the network, they can simply choose the epochs to be such that  $|\mathcal{B}_k|$ is equal to the diameter for all $k$.  

Note also that the only information exchanged at each time-step $t$ by the agents in Algorithm~\ref{algo:PoE} is their binary vector $\psi_{i,t}$.  Thus, this algorithm only requires each agent to transmit $m$ bits of information to its neighbors at each time-step, which can be significantly smaller than the number of bits required to encode and transmit probability distributions (as in existing distributed hypothesis testing algorithms).  

While the number of bits exchanged at each time-step is small, Algorithm~\ref{algo:PoE}  still requires the agents to continue communicating for all time (as they continue resetting their vectors $\psi_{i,t}$ at the start of each epoch, and running the set intersection steps).  In the next section, we show that if all of the agents know the diameter of the network, one can modify Algorithm~\ref{algo:PoE} to obtain the same benefits (finite-time learning with at most $m$ bits of communication per time-step) with a {\it finite number of communications}.


\section{Modifying the PoE Algorithm to Require A Finite Number of Communications}

Recall that in the PoE Algorithm described in the previous section, for each sample path in a set of measure $1$, there will be some $K \in \mathbb{N}$ such that for all epochs with indices $k \ge K$, the quantity $\psi_{i,t_k}$ calculated by each agent will be the indicator vector for the set $\Theta_i^{\theta^*}$.  Thus, subsequent epochs (past epoch $K$) do not add additional useful information, since the agents will simply be intersecting the same sets in each of those epochs.  This suggests that if we can identify when the vectors $\psi_{i,t}$ have stopped changing, the agents do not need to transmit further.  We can do this as follows.  At the start of each epoch, each agent $i$ calculates its vector $\psi_{i,t}$ as usual.  However, before it transmits that vector, it compares it to the vector that it calculated at the start of the previous epoch.  If the vector has not changed from the previous epoch, the agent does not transmit and simply waits.  If it receives a transmission from a neighbor during the epoch, then some other agent in the network must have initiated transmissions (spurred by a change in that agent's local vector); thus, the waiting agent also starts transmitting and participating in the distributed set intersection operations (with its local vector $\psi_{i,t}$).  In this way, once all agents' vectors $\psi_{i,t}$ have settled down to their final values, no further transmissions will be initiated.  Note that the epoch lengths will need to be of length at least twice the diameter ($D$) of the network, as it will potentially take $D$ time-steps for an agent to realize that some other agent has initiated transmissions, and then another $D$ times-steps for the set intersection iterations to converge to their final value.  In particular, each agent will need to know the diameter of the network, so that the initial epoch lengths can be set to be twice the diameter.

The modified algorithm is referred to as ``PoE-FC'' (Process of Elimination with Finite Communications), and shown in Algorithm~\ref{algo:PoE-FC}.  The algorithm introduces two new variables to the baseline PoE algorithm: a binary flag called `transmit', and a vector $\psi_{i,prev} \in \{0,1\}^m$.  At the first time-step $t_k$ of each epoch, if the quantity $\psi_{i,t_k}$ is different from the quantity $\psi_{i,t_{k-1}}$ calculated at the start of the previous epoch, the transmit flag is set to `true' (Line 6 of the algorithm).  The vector $\psi_{i,prev}$ is used to enable this comparison, and stores the value of the vector $\psi_{i,t_{k-1}}$ calculated at the start of the previous epoch.  If the vector calculated at the start of this epoch is the same as the one at the start of the previous epoch, the transmit flag is set to `false' (Line 9).  In Lines 12-14, agent $i$ only transmits its current vector $\psi_{i,t}$ to its out-neighbors if its transmit flag is `true'.  If any in-neighbor transmits its vector to agent $i$, then agent $i$ sets its own transmit flag to `true' to begin participating in the set intersection protocol (Line 16).  The rest of the algorithm proceeds in the same way as the original PoE algorithm.

We now prove the following result.

\begin{theorem}\label{thm:poe_fc_result}
Let $\mathcal{G} = \{\mathcal{V},\mathcal{E}\}$ be the network of agents, and let $\mathcal{I} \subseteq \mathbb{N}$ be an infinite set indicating the starting time-steps of the epochs. Suppose Assumption~\ref{assump:states_and_signals}, Assumption~\ref{assump:uniform_priors},  Assumption~\ref{assump:global_identifiability}, and Assumption~\ref{assump:nondecreasing_epochs} hold.  If each epoch has length at least equal to twice the diameter of the network, then the PoE-FC algorithm guarantees that for all $i\in\mathcal{V}$, the network belief $\mu_{i,t}$ converges to the true belief vector in finite time almost surely.  Furthermore, all agents stop transmitting after a finite number of time-steps almost surely.
\end{theorem}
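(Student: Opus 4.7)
The plan is to adapt the proof of Theorem~\ref{thm:poe_result} by using Corollary~\ref{cor:finite_time_local_beliefs} to pinpoint a single ``decisive'' epoch in which the transmit-flag mechanism first triggers, and then to show that no further flags ever trigger. I would fix $\omega \in \bar{\Omega}$, let $R_i \in \{0,1\}^m$ denote the indicator of $\Theta_i^{\theta^*}$, and let $T(\omega)$ be the finite time from Corollary~\ref{cor:finite_time_local_beliefs} past which the rounded local belief of every agent $i$ equals $R_i$. I would then define $K_{\text{stab}}$ to be the smallest epoch index such that the rounded vector computed at time $t_k$ equals $R_i$ for every agent $i$ and every $k \geq K_{\text{stab}}$; this index is finite.

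In epoch $K_{\text{stab}}$, by minimality, at least one agent's newly computed rounded vector differs from the one stored in $\psi_{i,\text{prev}}$ from epoch $K_{\text{stab}}-1$, so that agent's flag is true at time $t_{K_{\text{stab}}}$; the boundary case $K_{\text{stab}}=0$ is handled by the initial value of $\psi_{i,\text{prev}}$ together with Assumption~\ref{assump:global_identifiability}, which forces $R_i \ne \mathbf{1}_m$ for at least one $i$. Because the graph is strongly connected with diameter $D$ and the epoch has length at least $2D$, the transmission front emanating from this first transmitter reaches every agent within $D$ steps, at which point every flag has been set to true; another $D$ steps of distributed intersection then propagate each $R_j$ around the network, so that $\psi_{i,t_{K_{\text{stab}}+1}}$ equals the indicator of $\bigcap_{j \in \mathcal{V}} R_j = \{\theta^*\}$ for every $i$ (the last equality being Assumption~\ref{assump:global_identifiability}). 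The Line-10 update then yields $\mu_{i,t_{K_{\text{stab}}+1}} = e_{\theta^*}$. For each later epoch $k > K_{\text{stab}}$, the newly computed rounded vector coincides with $\psi_{i,\text{prev}}$ (both equal $R_i$), so every flag starts false; since no one transmits, no flag is ever triggered mid-epoch, and the network belief stays at $e_{\theta^*}$. This delivers both conclusions of the theorem almost surely.

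The main delicate step is the intersection analysis inside epoch $K_{\text{stab}}$, because agents join the protocol at staggered times and therefore transmit partially-intersected vectors rather than their initial $R_j$. Monotonicity of element-wise intersection ensures that each $\psi_{j,t}$ always remains a subset of $R_j$ and always contains $\theta^*$, so a BFS-style bookkeeping argument shows that for any false state $\theta \ne \theta^*$, the agent $j_\theta$ whose $R_{j_\theta}$ excludes $\theta$ (which exists by Assumption~\ref{assump:global_identifiability}) begins transmitting within $D$ steps of $t_{K_{\text{stab}}}$, and its exclusion of $\theta$ reaches every other agent within a further $D$ steps. This is precisely why the epoch-length requirement of $2D$ is both necessary and sufficient to guarantee a correct intersection.
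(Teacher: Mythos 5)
Your proposal is correct and follows essentially the same route as the paper's proof: identify the first epoch $K$ at which every agent's rounded vector equals the indicator of $\Theta_i^{\theta^*}$, use minimality of $K$ to force at least one transmit flag to trigger, let the transmission front propagate in $D$ steps and the set intersection in $D$ more (hence the $2D$ epoch-length requirement), and observe that every later epoch begins with all flags false so that communication ceases. Your explicit handling of the $K=0$ boundary case and the per-false-state tracking of how each exclusion propagates are slightly more careful than the paper's presentation (which instead argues that all flags are true by $t_K+D$ and then invokes the standard $D$-step intersection argument), but the underlying ideas are identical.
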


\begin{proof}
Following the proof of Theorem~\ref{thm:poe_result}, we first define the set $\bar{\Omega} \subset \Omega$ of sample paths of measure 1 indicated by Corollary $1$.  As argued in the proof of Theorem~\ref{thm:poe_result}, for each $\omega \in \bar{\Omega}$ there exists $K \in \mathbb{N}$ such that for all $k \ge K$ and for all agents $i \in \mathcal{V}$, the vector $\psi_{i,t_k}$ computed at the start of epoch $\mathcal{B}_k$ is the indicator vector for the set $\Theta_i^{\theta^*}$.  We fix an $\omega \in \bar{\Omega}$ for the rest of the proof in order to present the argument.

For the $\omega \in \bar{\Omega}$ under consideration, let $K$ be the index of the {\it first} epoch where the above property holds.  Then, in the first time-step $t_K$ of that epoch, the test in Line 3 of Algorithm~\ref{algo:PoE-FC} will pass, and thus each agent $i$ will compute $\psi_{i,t_K}$ in Line 4 of the algorithm.  Since we are considering the first epoch where all of these computed vectors accurately reflect the corresponding sets $\Theta_i^{\theta^*}$ for their agents, there will be at least one agent $l$ such that $\psi_{l,t_{K}}$ is different from its previously computed vector $\psi_{l,prev}$.  Thus, that agent will set its transmit flag to `true' in Line 6 of the algorithm.  Next, based on that updated flag, the agent will transmit its vector $\psi_{l,t_K}$ to its out-neighbors in Line 13 of the algorithm; furthermore, since the transmit flag does not get reset again until the start of the next epoch, the steps followed by agent $l$ in Algorithm~\ref{algo:PoE-FC} will be identical to the steps it would have followed in the original POE Algorithm (Algorithm~\ref{algo:PoE}) for the rest of the epoch.  

Now consider an out-neighbor $l_1$ of agent $l$ at time-step $t_K$.  That agent will receive the transmission from agent $l$, and thus will set its transmit flag to `true' in Line 16 of Algorithm~\ref{algo:PoE-FC}.  Once again, since the transmit flag does not get reset until the start of the next epoch, the steps followed by agent $l_1$ in Algorithm~\ref{algo:PoE-FC} will be identical to the steps it would follow in Algorithm~\ref{algo:PoE} for the rest of the epoch.  

Now, consider any agent $l_2$ that is an out-neighbor of some out-neighbor $l_1$ of $l$. Since agent $l_1$'s transmit flag was set to `true' at time-step $t_K$, that agent will transmit its state to all its out-neighbors at time-step $t_K+1$ (based on Line 13 of Algorithm~\ref{algo:PoE-FC}).  Thus, agent $l_2$'s transmit flag will be set to `true' in Line 16 of the algorithm, at which point it remains true for the remainder of the epoch.  Repeating this argument, we see that all agents in the network will have their transmit flags set to `true' by time-step $t_K+D$, where $D$ is the diameter of the network.  Note that each agent $i$ keeps its vector $\psi_{i,t}$ constant (in Line 22 of the algorithm) until its transmit flag gets set to `true'.  Furthermore, 
if each agent $i$ has a $1$ in the $j$-th position of its vector $\psi_{i,t_K}$, then the \texttt{intersect} function will preserve the $1$ in the $j$-th location of each agent's vector for all time-steps in that epoch. Additionally, if any agent has a $0$ in some element of its vector, it will never change that entry to a $1$ during that epoch. Thus, at time-step $t_K+D$, every agent $i$ will have a $1$ in its vector $\psi_{i,t_K+D}$ in the location corresponding to the state $\theta^*$ (and furthermore, that will be the only shared location where every agent has a 1 in its vector).  Starting at that time-step, since all agents will execute Lines 19-20 at each iteration of Algorithm~\ref{algo:PoE-FC} for at least another $D$ time-steps (since each epoch is of length at least twice the diameter), we see that at the end of the epoch, the vector $\psi_{i,t_{K+1}}$ computed in Line 20 will be the indicator vector for the true state.  Furthermore, at time-step $t_{K+1}-1$, the test in Line 24 of the algorithm will pass for every agent, and thus all agents will set $\mu_{i,t_{K+1}}$ in Line 25 to be the indicator vector for the true state.  

For all subsequent epochs, since $\psi_{i,t_{k}}$ (computed in Line 4) will be the same as the vector computed at the start of the previous epoch (for all agents), all agents will set their transmit flag to `false' in Line 9.  Thus, no agent will ever transmit, and all agents will simply propagate their current (correct) $\mu_{i,t}$ vector forward for all time (as indicated by Line 27).  Consequently, all agents stop communicating and learn the true state in finite time, almost surely.
\end{proof}

\begin{algorithm}
\caption{\textbf{(PoE-FC algorithm)}  Each agent $i \in \mathcal{V}$ executes this algorithm in parallel} \label{algo:PoE-FC}
\textbf{Input} A set of time-indices $\mathcal{I}$, denoting epochs
\begin{algorithmic}[1]
\State Agent $i$ initializes $\pi_{i,0} = \frac{1}{m}\mathbf{1}_m$, $\psi_{i,0} = \mathbf{1}$, $\mu_{i,0} = \pi_{i,0}$, transmit = `false', $\psi_{i,prev} = \psi_{i,0}$
\For {$t \in \mathbb{N}$}  
\If {$t \in \mathcal{I}$}
\State Set $\psi_{i,t} \gets$ \texttt{round}$(\pi_{i,t})$
\If {$\psi_{i,t} \ne \psi_{i,prev}$}
\State Set transmit = `true'
\State $\psi_{i,prev} \gets \psi_{i,t}$
\Else
\State Set transmit = `false'
\EndIf
\EndIf
\If {transmit == `true'}
\State Send $\psi_{i,t}$ to all out-neighbors
\EndIf
\If {any in-neighbor transmits}
\State Set transmit = `true'
\EndIf
\If {transmit == `true'}
\State Receive $\psi_{j,t}$ from all in-neighbors $j \in \mathcal{N}_i^{-}$
\State $\psi_{i,t+1} \gets$ \texttt{intersect}$\left(\left\{\psi_{j,t}, j \in \mathcal{N}_i^{-} \cup \{i\}\right\}\right)$
\Else
\State $\psi_{i,t+1} \gets \psi_{i,t}$
\EndIf
\If {($t+1 \in \mathcal{I}$) AND (transmit == `true')}
\State Set $\mu_{i,t+1} \gets \frac{1}{\|\psi_{i,t+1}\|_1}\psi_{i,t+1}$ 
\Else
\State Set $\mu_{i,t+1} \gets \mu_{i,t}$
\EndIf
\State Update $\pi_{i,t+1}$ based on the local Bayesian rule \eqref{eqn:Bayes}
\EndFor
\end{algorithmic}
\end{algorithm}

\begin{figure*}[t]
  \centering
    \includegraphics[scale=0.5]{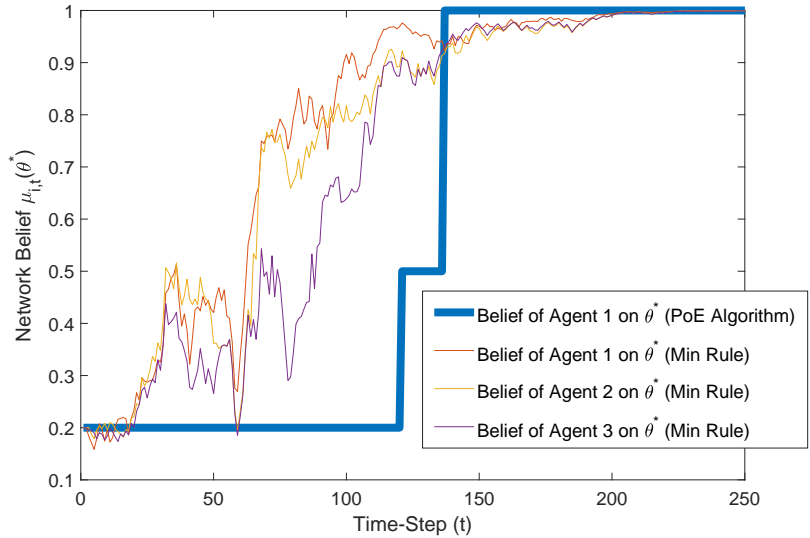}
  \caption{An illustration of the convergence of the network belief vector $\mu_{i,t}$ under the PoE Algorithm \ref{algo:PoE}, and for three agents in the network under the ``min''-rule algorithm from \cite{mitraACC19}.}
    \label{fig:network_belief_illustration}
\end{figure*}

\section{Simulation}
\label{sec:sim}
To illustrate the PoE algorithm, we generate a geometric random graph with 200 nodes, where each node is placed uniformly at random in the unit square.  We place an edge between two nodes if the Euclidean distance between them is at most 0.15, yielding a graph with a diameter of 11.  

We consider a hypothesis testing problem with a set $\Theta = \{\theta_1, \theta_2, \theta_3, \theta_4, \theta_5\}$ of five states.  We set the signal space of each agent $i \in \mathcal{V}$ to be $\mathcal{S}_i = \{H, T\}$.  For agent 1, we set the distributions of the observations under each of the states as follows:
\begin{align*}
    l_1(H | \theta_1) &= l_1(H | \theta_2) = l_1(H | \theta_5) = 0.5 \\
    l_1(T | \theta_1) &= l_1(T | \theta_2) = l_1(T | \theta_5) = 0.5 \\
    l_1(H | \theta_3) &= l_1(H | \theta_4) = 0.4 \\
    l_1(T | \theta_3) &= l_1(T | \theta_4) = 0.6.
\end{align*}
For each of the other agents, we assign a random permutation of the above distribution over the various states.  

We run the PoE algorithm by setting the true state $\theta^* = \theta_3$, and choosing the parameter $\alpha = \frac{1}{1000}$ for the \texttt{round} function (see Corollary~\ref{cor:finite_time_local_beliefs}).  We show the network belief $\mu_{i,t}$ maintained by a generic agent under this algorithm in Fig.~\ref{fig:network_belief_illustration}. For comparison, we also show the network beliefs for three different agents under the ``min'' rule from \cite{mitraACC19}, which provides the fastest existing (asymptotic) convergence rate for the distributed hypothesis testing problem.  As we can see from the figure, the network beliefs generated by the PoE algorithm converge to the indicator vector for the true state in finite time (approximately 150 time-steps).  Furthermore, since each agent only transmits a binary vector of length $5$ to its neighbors at each time-step, each agent transmits only approximately $750$ bits of information by the time all agents learn the true state.  


\section{Conclusions and Extensions}
In this paper, we first showed that existing algorithms that provide asymptotic learning guarantees can be easily modified to provide finite-time learning; in the context of existing work that seeks to optimize the asymptotic rate of learning, this simple insight indicates that arbitrarily large (asymptotic) rates of learning are easily achievable. 

We next provided a simple algorithm that allows all agents to learn the true state in finite time, and only requires each agent to transmit a binary vector (of length equal to the number of hypotheses) at each time-step.  We followed up this algorithm with a modification that also enables all agents to stop transmitting after a finite length of time, under the assumption that all agents know the diameter of the network.

The key to our approach is that each agent simply leverages its local signals to rule out certain hypotheses, and then the agents run a simple distributed set intersection protocol to find the state that has not been ruled out by every agent. We expect that our algorithm can be readily extended in various directions. For example, for certain classes of time-varying networks, we expect our PoE algorithm will also guarantee finite-time learning, provided that the network is connected over appropriately defined intervals.  Similarly, if the observations at each agent are not i.i.d. over time, one can replace the iterative Bayes rule \eqref{eqn:Bayes} with a non-iterative Bayesian update (where the local belief is updated as a function of all previous measurements at each time-step).  In cases where some of the agents are adversarial, we expect we can make our algorithms resilient by introducing a  ``local-filtering'' step into the distributed set intersection portion of the algorithms, following similar ideas to \cite{mitraACC19}.  Finally, since the convergence properties of our algorithm are essentially dictated by the behavior of the local beliefs at each agent, we expect that one can perform a finite time analysis in order to obtain crisp probabilistic bounds on the time taken for the algorithm to converge.  

As noted in Section~\ref{sec:finite_time_learning}, it will also be of interest to revisit existing asymptotic learning algorithms to understand their performance when modified to yield finite time learning as in Proposition~\ref{prop:finite_time_learning}.  Indeed, as can be observed from the results of our simulation in Fig.~\ref{fig:network_belief_illustration}, modifying the ``min'' rule from \cite{mitraACC19} in this manner would cause the beliefs to converge in less time than required by the PoE algorithm.  This merits a formal analysis and comparison of these existing algorithms.

\bibliographystyle{IEEEtran}
\bibliography{refs}

\end{document}